\newtheorem{mytheorem}{Theorem}
\newtheorem{lemma}{Lemma}
\newacronym{6LoWPAN}{6LoWPAN}{IPv6 over Low-power Wireless Personal Area Networks}
\newacronym{CCM}{CCM}{Counter with CBC-MAC}
\newacronym{PDoS}{PDoS}{path-based DoS}
\newacronym{WPS}{WPS}{Wi-Fi Protected Setup}
\newacronym{CCA}{CCA}{Clear Channel Assessment}
\newacronym{OTG}{OTG}{On-The-Go}
\newacronym{PHY}{PHY}{Physical-Layer}
\newacronym{RSSI}{RSSI}{Received-Signal-Strength-Identification}
\newacronym{CBC-MAC}{CBC-MAC}{CBC-MAC}
\newacronym{MAC}{MAC}{MAC}
\newacronym{AES}{AES}{AES}
\title{On-The-Fly Secure Key Generation with Deterministic Models}
\author{
\IEEEauthorblockN{Rick Fritschek and Gerhard Wunder}
\IEEEauthorblockA{Heisenberg Communications and Information Theory Group\\
    Freie Universit\"at Berlin, \\
    Takustr. 9,
    D--14195 Berlin, Germany\\
    Email: rick.fritschek@fu-berlin.de, g.wunder@fu-berlin.de
    \thanks{This work was carried out within DFG grant WU 598/8-1 (DFG Priority Program on Compressed Sensing).}}
}
\begin{document}

\maketitle
\begin{abstract}
It is well-known that wireless channel reciprocity together with fading can be exploited to generate a common secret key between two legitimate communication partners. This can be achieved by exchanging known deterministic pilot signals between both partners from which the random fading gains can be estimated and processed. However, the entropy and thus quality of the generated key depends on the channel coherence time. This can result in poor key generation rates in a low mobility environment, where the fading gains are nearly constant. Therefore, wide-spread deployment of wireless channel-based secret key generation is limited. To overcome these issues, we follow up on a recent idea which uses unknown random pilots and enables ``on-the-fly'' key generation. In addition, the scheme is able to incorporate local sources of randomness but performance bounds are hard to obtain with standard methods. In this paper, we analyse such a scheme analytically and derive achievable key rates in the Alice-Bob-Eve setting. For this purpose, we develop a novel approximation model which is inspired by the linear deterministic and the lower triangular deterministic model. Using this model, we can derive key rates for specific scenarios. We claim that our novel approach provides an intuitive and clear framework to analyse similar key generation problems.

\end{abstract}

\begin{IEEEkeywords}
Physical-layer security, secure key generation, deterministic models
\end{IEEEkeywords}

\section{Introduction}

New application scenarios such as the Internet of Things (IoT) and the Tactile Internet \cite{Wunder15} have sparked
interest in new security paradigms which are related to the physical layer. A typical security challenge e.g. in IoT is the design of
efficient and usable key management schemes for devices with high resource constraints \cite{Mukherjee15, Krentz15}. A famous problem in this context is the key-agreement between two parties, Alice and Bob, while being intercepted by an eavesdropper called Eve. The objective is to generate a key with a rate as large as possible while preventing Eve to get information about the key. The key must be the same for both Alice and Bob to be useful and thus forming a key-agreement problem. Systematic study of key-agreement and common randomness started with the work of \cite{Ahlswede93} and \cite{Maurer93}, where upper and lower bounds were found as well as insight which remained important. In \cite{Ahlswede93}, the key-agreement problem was split into two models, the source-type model and the channel-type model. In the former, each of the two terminals, Alice and Bob, has access to one component of a two component random source. In the later, both terminals exchange information via a wireless channel. In both cases, the terminals try to decode and compute a common key from their observations. Recent advances exploit the wireless channel to generate a common randomness at both terminals. In \cite{Ye10,Lai12} the concept of reciprocity was used to gain an advantage over Eve. Reciprocity refers to the phenomenon that the channel gain and therefore the channel, between two terminals, is nearly the same in both directions. If one considers a fading channel, then the channel gain represents another source of randomness. This can be estimated at both terminals with pilot signalling. Alice and Bob can agree on the same randomness, as long as perfect reciprocity and perfect estimation is assumed. These schemes even work under imperfect conditions, as long as both observations are correlated, by using Slepian-Wolf coding schemes over a public channel. However, the fading gain needs to provide sufficient randomness, which is dependent on the coherence time. A slow varying channel therefore results in a lower key-rate, which can even go to zero in the limit. Recent works on the subject try to overcome this problem by using relay-assisted schemes or mixed schemes, where the model is considered as source and channel-type dependent on the situation \cite{LaiPoorKeyGen}, \cite{Lai12}. Another approach was used in \cite{Wunder2016}, which utilized local randomness sources with a product signalling scheme to overcome the problem. However, the resulting rate expressions are non-trivial and closed-form solutions could not be obtained.

{\bf Contributions:} We develop a model, closely related to the linear deterministic model \cite{Avestimehr2011} and the lower triangular deterministic model \cite{Niesen-Ali}, to approximate the Gaussian channel model of \cite{Wunder2016}. The new approximate model has several advantages due to its properties, e.g. public communication being obsolete and build-in quantization. Using this model, we derive key rate results for several special cases of \cite{Wunder2016}. Moreover, one can compare the new method of product signalling with the classical pilot signalling. Specifically, we show that product signalling offers a more robust key generation in comparison to pilot signalling, since local randomness can be utilized to compensate the lack of randomness in the channel gain. Finally, we believe that the approximation model can be useful in other scenarios too, e.g multi-user key generation.

\section{Gaussian System Model} 
%\begin{figure}
%\centering
%\includegraphics[scale=0.9]{figs/TWF_full.pdf}
%\caption{Illustration of the system model}
%\label{System model}
%\end{figure}

Alice communicates with Bob in a two-way non-duplex fashion. Looking at $n$ total channel uses, both alternate in receive and transmit mode such that Alice sends signals to Bob at the odd channel uses while, Bob utilizes the even channel uses. We denote the channel gain from Alice to Bob with $K$ and from Bob to Alice with $K'$. We assume reciprocity, meaning that within one communication round ($i$ and $i+1$), both $K$ and $K'$ are the same. Moreover, we assume that the channel gain is a fading parameter, changing randomly with a Gaussian distribution $K\sim\mathcal{N}(0,\sigma_K^2)$ after a full communication round (channel use $i$ and $i+1$). Alice and Bob have an additional local source of randomness, $\omega_A$ and $\omega_B$, respectively, which can be used for the inputs. Both communication channels are in presence of a wire-tapper Eve, which can receive Alice's input through a channel $H_1$ and Bob's input through a channel $H_2$. We can therefore write the channel equations in the following way 
 \begin{IEEEeqnarray*}{rCl}
Y_B&=& KX_A+Z_1\\
Y_A&=& KX_B+Z_2
\end{IEEEeqnarray*} and 
\begin{equation}
Y_E=\begin{cases}
H_{1}X_A & \text{for } n \text{ odd} \\
H_{2}X_B & \text{for } n \text{ even}.
\end{cases}
\end{equation}

The system model is illustrated in Fig.~\ref{System model2}. The figure also depicts the public noiseless channels $\Phi$ and $\Psi$, which are available for both Alice and Bob.
We have $t$ time instances in which we use the wireless channels $n$ times and the public noiseless channels $k=t-n$ times, where $n\leq t$. 

\begin{figure}
\centering
\includegraphics[scale=0.9]{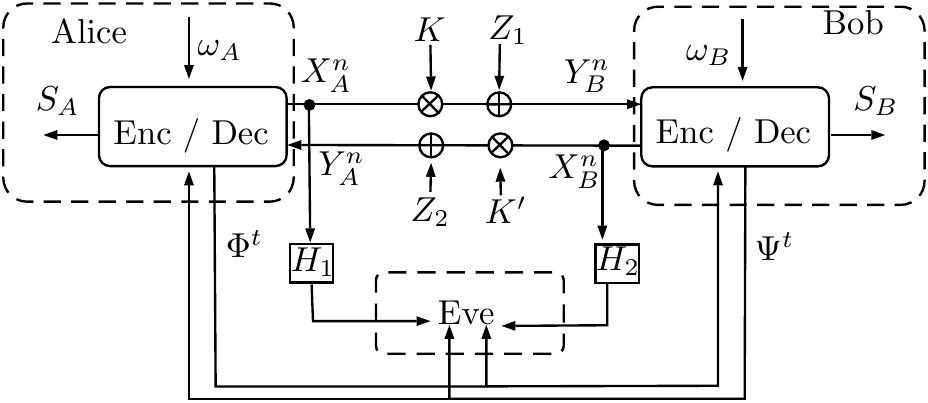}
\caption{Illustration of the system model with (dashed communication)  and without side information at Eve}
\label{System model2}
\end{figure}

There are $n$ rounds of wireless communication, where in each round $i$ Alice (Bob) sends a codeword $X_1(\omega_A,i)$ ($X_2(\omega_B,i)$) over the channel. We denote $X_1^n=(X_1(1),\cdots,X_1(n))$ and $X_2^n=(X_2(1),\cdots,X_2(n))$.

 Moreover, let $f_A$ and $f_B$ denote the key generation functions at Alice and Bob, respectively. We therefore have that the keys for Alice and Bob, are $S_A=f_A(X_A^n,Y_A^n,\Phi^k)$ and $S_B=f_B(X_B^n,Y_B^n,\Psi^k)$, respectively.

We define an achievable key rate $R_{key}$ if for every $\epsilon>0$ and sufficiently large $n$ there exists a strategy such that $S_A$ and $S_B$ satisfy
\begin{equation}
\Pr\{S_A\neq S_B\}<\epsilon,
\end{equation}
\begin{equation}\label{sndEQ}
\tfrac{1}{n} I(\Phi^k,\Psi^k,Y_E^n;S_A)<\epsilon,
\end{equation}
\begin{equation}
\tfrac{1}{n} H(S_A) > R_{key}-\epsilon,
\end{equation}
\begin{equation}
\tfrac{1}{n} \log |S_A| <\tfrac{1}{n} H(S_A)+\epsilon,\label{Unif}
\end{equation}
where $|S_A|$ denotes the alphabet size of the discrete key random variable $S_A$, see also \cite{Ahlswede93}.

It was shown in \cite{Ahlswede93} that if both terminals observe correlated source outputs $X^n$ and $Y^n$ from a discrete memoryless multiple source with generic sources $(X,Y)$, a secrecy key rate of $I(X;Y)$ can be achieved. The proof uses only a single forward or backward transmission of the public channel along with an extended Slepian-Wolf coding scheme. While the result was originally proved for discrete sources, it can be extended to continuous sources as well \cite{Ye06,Nitinawarat08}. Moreover, the result can be extended to the case of a pair of sources, for example $(X_A,Y_A,X_B,Y_B)$. To see this, one can use the same idea as in \cite{Ahlswede93}, in conjunction with the Slepian-Wolf theorem for multiple sources. 

%\hl{Alice therefore needs to send with a rate of $r>H(X_A,Y_A|Y_B,X_B)$, while the knowledge of $ (Y_B,X_B)$ makes it possible for Bob to decode.} 

\section{A Deterministic Model for Key Generation}

The idea in \cite{Wunder2016} was to utilize the local randomness $\omega_A$ and $\omega_B$ such that Alice and Bob send random signals over the channel. Therefore, instead of measuring the channel gain $K$ with pilot signalling alone, one gets a channel output $Y_A$ and $Y_B$ at Alice and Bob, respectively. Both of these are correlated via channel gain. To get some gain out of the local randomness, one also considers the local source of the sender. This means that Alice and Bob {\it virtually} receive $(Y_A,X_A)$ and $(Y_B,X_B)$, respectively. Now both sources are correlated in $K$, $X_A$ and $X_B$ and the following theorem was shown in \cite{Wunder2016}.
\begin{mytheorem}\label{Theorem1}
For general observations $Y_A=KX_B+Z_2$ and $Y_B=KX_A+Z_1$, of the products $KX_B$ and $KX_A$, with input $X_A,X_B \sim \mathcal{N}(0,P)$, channel gain $K \sim \mathcal{N}(0,\sigma_K^2)$ and noise $Z_1,Z_2 \sim \mathcal{N}(0,\sigma_Z^2)$ it holds that
\begin{IEEEeqnarray*}{rCl}
\IEEEeqnarraymulticol{3}{l}{
I(Y_A,X_A;Y_B,X_B)}\\
 &\geq & E_K[\log(1+\tfrac{|k|^2P}{\sigma^2_{Z}})]\\
&&-\: \tfrac{1}{2} E_{X_A}[\log (1+\tfrac{|x_A|^2\sigma_K^2}{\sigma^2_{Z}})]- \tfrac{1}{2} E_{X_B}[\log (1+\tfrac{|x_B|^2\sigma_K^2}{\sigma^2_{Z}})]\\
&&+\: \tfrac{1}{2} E_{X_A,X_B}\left[\log \left(1+ \frac{x_A^2x_B^2\sigma_K^4}{(x_A^2+x_B^2)\sigma_K^2\sigma_{Z}^2+\sigma_{Z}^4}\right)\right].
\end{IEEEeqnarray*}
\end{mytheorem}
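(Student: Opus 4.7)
The plan is to decompose $I(Y_A, X_A; Y_B, X_B)$ via the chain rule, exploit the independence of $(K, X_A, X_B, Z_1, Z_2)$, and apply ``conditioning reduces entropy'' to $K$ on the marginal terms. Writing
\[
I(Y_A, X_A; Y_B, X_B) = h(Y_A, X_A) + h(Y_B, X_B) - h(Y_A, X_A, Y_B, X_B),
\]
I would use that $X_A$ is independent of $(K, X_B, Z_2)$ and hence of $Y_A$, the symmetric statement for $X_B$, and $X_A \perp X_B$, to obtain $h(Y_A, X_A) = h(X_A) + h(Y_A)$, $h(Y_B, X_B) = h(X_B) + h(Y_B)$, and $h(Y_A, X_A, Y_B, X_B) = h(X_A) + h(X_B) + h(Y_A, Y_B \mid X_A, X_B)$. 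All $h(X_\bullet)$ contributions cancel, leaving the compact identity
\[
I(Y_A, X_A; Y_B, X_B) = h(Y_A) + h(Y_B) - h(Y_A, Y_B \mid X_A, X_B).
\]

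Next I would lower-bound the marginals by $h(Y_A) \geq h(Y_A \mid K)$ and $h(Y_B) \geq h(Y_B \mid K)$; this is the only actual inequality in the proof, and it is what replaces the unwieldy non-Gaussian density of $K X_B + Z_2$ by a tractable conditionally Gaussian quantity. Given $K = k$, $Y_A = k X_B + Z_2$ is Gaussian with variance $k^2 P + \sigma_Z^2$, so $h(Y_A \mid K) = \tfrac{1}{2} E_K[\log(2\pi e(K^2 P + \sigma_Z^2))]$, and symmetrically for $h(Y_B \mid K)$. For the subtracted term, given $(x_A, x_B)$ the pair $(Y_A, Y_B)$ is jointly Gaussian as a linear image of $(K, Z_1, Z_2)$, and a direct computation of the $2 \times 2$ conditional covariance gives
\[
\det \Sigma(x_A, x_B) = (x_B^2 \sigma_K^2 + \sigma_Z^2)(x_A^2 \sigma_K^2 + \sigma_Z^2) - x_A^2 x_B^2 \sigma_K^4 = (x_A^2 + x_B^2)\sigma_K^2 \sigma_Z^2 + \sigma_Z^4,
\]
so $h(Y_A, Y_B \mid X_A, X_B)$ is $\tfrac{1}{2} E_{X_A, X_B}[\log((2\pi e)^2 \det \Sigma)]$ in closed form.

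Collecting the three pieces, the $\log(2\pi e)$ constants cancel and I obtain the compact intermediate bound $E_K[\log(K^2 P + \sigma_Z^2)] - \tfrac{1}{2} E_{X_A, X_B}[\log((X_A^2 + X_B^2)\sigma_K^2 \sigma_Z^2 + \sigma_Z^4)]$. The final step is purely algebraic: add and subtract $\tfrac{1}{2} E_{X_A}[\log(X_A^2 \sigma_K^2 + \sigma_Z^2)] + \tfrac{1}{2} E_{X_B}[\log(X_B^2 \sigma_K^2 + \sigma_Z^2)]$, and use the identity $(x_A^2 \sigma_K^2 + \sigma_Z^2)(x_B^2 \sigma_K^2 + \sigma_Z^2) = x_A^2 x_B^2 \sigma_K^4 + (x_A^2 + x_B^2)\sigma_K^2 \sigma_Z^2 + \sigma_Z^4$ to recognise the fourth term as a ``$1+\mathrm{SNR}$''-form logarithm; dividing each log argument by the appropriate power of $\sigma_Z^2$ then recovers the four-term expression in the statement exactly. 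I do not expect a conceptual obstacle, the only potentially annoying part is the bookkeeping: verifying the independence relations that make the cross terms vanish, and matching the clean two-term lower bound to the specific four-term split written in the theorem.
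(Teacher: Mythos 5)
The paper does not actually prove Theorem~\ref{Theorem1}; it imports the statement from \cite{Wunder2016} without reproducing the argument, so there is no in-paper proof to compare against line by line. Your blind derivation is, however, correct and complete. The opening identity $I(Y_A,X_A;Y_B,X_B)=h(Y_A)+h(Y_B)-h(Y_A,Y_B\mid X_A,X_B)$ follows exactly as you say from the mutual independence of $(K,X_A,X_B,Z_1,Z_2)$: $Y_A=KX_B+Z_2$ is independent of $X_A$, $Y_B=KX_A+Z_1$ is independent of $X_B$, and the $h(X_A)+h(X_B)$ contributions cancel between the joint and marginal terms. The single inequality $h(Y_\bullet)\geq h(Y_\bullet\mid K)$ is the right move, since it sidesteps the intractable density of a Gaussian product plus noise (the Bessel-function issue the paper alludes to) while keeping the conditionally Gaussian entropies exact. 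Your conditional covariance computation checks out, in particular
\begin{equation*}
\det\Sigma(x_A,x_B)=(x_A^2\sigma_K^2+\sigma_Z^2)(x_B^2\sigma_K^2+\sigma_Z^2)-x_A^2x_B^2\sigma_K^4=(x_A^2+x_B^2)\sigma_K^2\sigma_Z^2+\sigma_Z^4,
\end{equation*}
and the resulting two-term bound $E_K[\log(K^2P+\sigma_Z^2)]-\tfrac{1}{2}E_{X_A,X_B}[\log\det\Sigma]$ does telescope exactly into the four-term expression of the theorem: the added and subtracted $\tfrac{1}{2}E[\log(X_\bullet^2\sigma_K^2+\sigma_Z^2)]$ terms combine with the factorization of $\det\Sigma+x_A^2x_B^2\sigma_K^4$ to produce the last ``$1+\mathrm{SNR}$'' term, and the powers of $\sigma_Z^2$ introduced by normalizing each logarithm sum to zero. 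A worthwhile observation your derivation makes explicit is that the four-term split is purely cosmetic algebra on a two-term bound; its value is interpretive, isolating the channel-gain contribution (the last term) from the local-randomness contributions, which is precisely the point the paper wants to make.
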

The theorem shows that the new scheme splits its key generation gain between the sources of local randomness and the channel gain. The channel gain contribution is in the last line of theorem~1 and corresponds to state-of-the-art results. This shows that if the key generation rate from the channel gain is low, due to a long coherence time and therefore nearly constant channel gain, it is still possible to extract a non-zero key rate through the local randomness sources.
%\subsection{Using Local Randomness in Key Generation}
%\hl{The theorem shows that as long as a local randomness sources can be exploited the new scheme would achieve a rate bounded away from zero, scaling with the input power even in case that the channel gain is constant. However, there is also a trade-off between the input power and the channel gain variance.}

The proof for the achievability depends on classical typicality arguments and therefore lacks of simple ways to implement. The main challenge was a practical way to reconcile observations of both Alice and Bob. A simple solution to this problem, also presented in \cite{Wunder2016}, was to multiply the observations with the respective mutual local sources to produce correlated observations of a source $(X_AY_A,X_BY_B)$. This would yield a secure key rate of $I(Y_AX_A;Y_BX_A)$. However, %from the Data processing inequality is easy to see that any processing on that three sources (fading gain, two local randomness sources), to form a key, potentially leads to entropy losses for the key generation. This is reflected in the following inequality.
%\begin{equation}
%\label{DataProcessingMultiplication}
%I(\hat{Y}_A,X_1;\hat{Y}_B,X_2) \geq I(\hat{Y}_AX_1;\hat{Y}_BX_2).
%\end{equation}
exact calculation of the mutual information term $I(Y_AX_A;Y_BX_A)$ is involved, even for Gaussian signals. This is due to the multiplication operations which yield Bessel functions and to the best of our knowledge there is no known closed form solution for this term. However, we will approximate the term with our novel deterministic model to gain insights into its nature.

\subsection{The Linear Deterministic Model}
Lets look at a Gaussian point-to-point channel 
\begin{equation}
y=xh+z
\end{equation} 
where $z\sim \mathcal{N}(0,1)$ is Gaussian white noise and the input signal $x$ has an average input power constraint $E[|x|^2]\leq 1$. Both, the noise power and the input power are normalized to 1. Therefore, the instantaneous channel gain $h$, is scaled to $|h|=\sqrt{\text{SNR}}$. For the deterministic model, it is now assumed that the input signal $x$ and the noise $z$ have a peak power of one. Now one can represent  
both by a binary expansion which yields the following system model
\begin{equation}
y=2^{\tfrac{1}{2}\log \text{SNR}} \sum\limits_{i=1}^{\infty} x(i)2^{-i}+\sum\limits_{i=1}^{\infty} z(i)2^{-i}.
\end{equation}
One can group the bits under the decimal point together which yields 
\begin{equation}
y\approx 2^N \sum\limits_{i=1}^{N} x(i)2^{-i}+\sum\limits_{i=1}^{\infty} (x(i+n)+z(i))2^{-i}\label{approximation},
\end{equation}where $N=\lceil \tfrac{1}{2}\log \text{SNR} \rceil^+ $\footnote{For a complex Gaussian model, one can use the correspondence $N=\lceil \log \text{SNR} \rceil^+$}. Which is an approximation due to ignoring the 1-bit carry-over and due to restricting the channel gain to powers of two. 
The second sum can be cut-of since those bits are compromised by noise and it therefore yields a deterministic approximation. This means that the model takes an input $x=0.b_1b_2b_3\cdots$ and shifts the bits for $N$ positions over the decimal point $2^Nx=b_Nb_{N-1}\cdots b_1.b_0b_{-1}\cdots $. The noise just impairs the bits on the right of the decimal point $2^Nx+z=b_Nb_{N-1}\cdots b_1.\tilde{b}_0\tilde{b}_{-1}\cdots $, denoted as $\tilde{b}$. A deterministic approximation is achieved by cutting of the bit chain at the decimal point $y\approx b_N b_{N-1} \cdots b_1$. The resulting model can be written in an algebraic fashion as
\begin{equation}
\mathbf{y}=\mathbf{S}^{q-N}\mathbf{x},
\end{equation}
where $\mathbf{x}\in \mathbb{F}_2^q$ is a bit vector and $\mathbf{S}^{q-N}$ is a $q\times q-$shift matrix \begin{equation}
\mathbf{S}=\begin{pmatrix}
0 & 0 &  \cdots & 0 & 0\\
1 & 0 &  \cdots & 0 & 0\\
0 & 1 &  \cdots & 0 & 0\\
\vdots & \vdots & \ddots & \vdots & \vdots \\
0 & 0 &  \cdots & 1 & 0\\
\end{pmatrix}.
\end{equation}
With $\mathbf{S}$ an incoming bit vector can be shifted for $q-N$ positions. %In a multi-user scenario with $k$ users and  different channel gains $N_k$, $q:=\max\{N_k\}$. 
Note that we have used the approximation in \eqref{approximation} that $h=2^N\hat{h}\approx 2^N$ for $\hat{h}\in [1,2)$. However, we want to use the decimal bits of the channel gain as a source of randomness. Lets call $2^N$ the coarse channel gain and $\hat{h}$ the fine channel gain. The lower triangular model from \cite{Niesen-Ali} incorporates the fine channel gain into the model and represents it as a binary expansion, too. This yields a cauchy-product or discrete convolution between the bits of $x$ and $\hat{h}$ in \eqref{approximation}. It was proposed to use lower triangular toeplitz matrices to represent the fine channel gain, therefore the normal matrix-vector multiplication can represent the discrete convolution of the channel. As before, the resulting binary vector $\mathbf{y}$ gets cut off at noise level. We therefore have the following deterministic model
\begin{equation}
\mathbf{y}=\mathbf{H}\mathbf{x}
\end{equation}
with all operations over $\mathbb{F}_2$, $\mathbf{x}$ represents a bit vector and $\mathbf{H}$ is a square lower triangular Toeplitz matrix \begin{equation}
\mathbf{H}=\begin{pmatrix}
1 & 0 &  \cdots & 0 & 0\\
h_1 & 1 &  \cdots & 0 & 0\\
 \vdots & \vdots & \ddots & \vdots & \vdots \\
h_{N-2} & h_{N-3} & \cdots & 1 & 0\\   
h_{N-1} & h_{N-2} &  \cdots & h_1 & 1\\
\end{pmatrix}.
\end{equation}
where the $h_i$ represent the $N-1$ bits behind the decimal point of $\hat{h}=1.h_1h_2\cdots h_{N-1}$. We introduce the function $T_{lt}(\mathbf{x})=\mathbf{X}$ which maps a vector $\mathbf{x}$ to its square lower triangular Toeplitz matrix $\mathbf{X}$ and we introduce $T_{lt}^{-1}(\mathbf{X})=\mathbf{X}e_1=\mathbf{x}$ which maps a lower triangular Toeplitz matrix back to its vector. Here we used $e_1$ to indicate the first column vector of the identity matrix with dimension $\dim(\mathbf{X})$.

\subsection{A Deterministic Model for Key Exchange}
Let Alice send a real number $x_A(\omega_A)$ over the channel, while Bob sends $x_B(\omega_B)$ over the channel, where $\omega_A$ and $\omega_B$ represent local sources of randomness. We assume that both input signals have a unitary normalized power constraint $E[|x_A|^2],E[|x_B|^2]\leq 1$. We denote the channel gain between Alice and Bob by $k$. While keeping the coarse channel gain $2^N$ fixed, we let the fine channel gain $\hat{k}$ be uniformly distributed over the interval $[1,2)$ for all channel uses $n$. Note that this ensures that our generated key has maximal entropy and satisfies \eqref{Unif}. It is important to emphasize that we assume reciprocity between Alice and Bob {\it only} in the fine channel gain $\hat{k}$ and not in the coarse channel gains. The reason to allow differences in the coarse gain is to capture input power differences at the transceivers. We therefore denote the channel gain matrix from Bob to Alice by $\mathbf{K}'$.  We can model the system (fig. \ref{System model2}) in a deterministic way as 
\begin{IEEEeqnarray*}{rCl}
\mathbf{y}_B &=& \mathbf{K}\mathbf{x}_A\\
\mathbf{y}_A &=& \mathbf{K'}\mathbf{x}_B\\
\mathbf{y}_E &=& (\mathbf{H}_{1}\mathbf{x}'_A,\mathbf{H}_{2}\mathbf{x}'_B)
\end{IEEEeqnarray*}
where $\mathbf{y}_B$, $\mathbf{x}_A$ $\in \mathbb{F}_2^{N_A}$, $\mathbf{K} \in \mathbb{F}_2^{N_A\times N_A}$, $\mathbf{y}_A$, $\mathbf{x}_B$ $\in \mathbb{F}_2^{N_B}$, $\mathbf{K'} \in \mathbb{F}_2^{N_B\times N_B}$. Note that we assume a time division in which a transceiver can either receive or transmit. Both Alice and Bob alternate in receive and transmit mode. Alice uses the odd $n$ channel uses for transmission, while Bob uses the even $n$ channel uses for transmission. Therefore Eve observes
\begin{equation}
\mathbf{y}_E=\begin{cases}
\mathbf{H}_{1}\mathbf{x}'_A & \text{for } n \text{ odd} \\
\mathbf{H}_{2}\mathbf{x}'_B & \text{for } n \text{ even} 
\end{cases}
\end{equation} 
and so we have that $\mathbf{y}_E \in \mathbb{F}_2^{N_{1}}$ and $\mathbf{y}_E \in \mathbb{F}_2^{N_{2}}$ for $n$ odd and even, respectively. Moreover, $\mathbf{H}_{1} \in \mathbb{F}_2^{N_{1} \times N_{1}}$, $\mathbf{x}'_A \in \mathbb{F}_2^{N_{1}}$ and $\mathbf{H}_{2}\in \mathbb{F}_2^{N_{2} \times N_{2}}$, $\mathbf{x}'_B \in \mathbb{F}_2^{N_{2}}$. Note that we assume $N_1,N_2\leq\min\{N_A,N_B\}$ which is the same as assuming a channel advantage in case of classical wiretap models. We therefore guarantee a positive secure key rate in certain cases. Moreover, we have introduced a cut-off version of $\mathbf{x}_A $ and $\mathbf{x}_B$, denoted by $\mathbf{x}'_A $ and $\mathbf{x}'_B$, limited to $N_1$ and $N_2$ bits, respectively.

\subsection{Deterministic Security Constraints}

Due to the deterministic nature of the model for key exchange we can make simplifications on the security constraints. We define an achievable deterministic key rate $R_{d}$ if for every $\epsilon>0$ and sufficiently large $n$ there exists a strategy such that both generated keys at Alice and Bob, denoted by $\mathbf{s}_A$ and $\mathbf{s}_B$, respectively, satisfy
\begin{equation}
\Pr\{\mathbf{s}_A\neq \mathbf{s}_B\}=0,
\end{equation}
\begin{equation}
\tfrac{1}{n} H(\mathbf{s}_A) > R_{d}-\epsilon,
\end{equation}
\begin{equation}
\tfrac{1}{n} \log |\mathbf{s}_A| <\tfrac{1}{n} H(\mathbf{s}_A)+\epsilon,\label{Unif_det}
\end{equation}
where $|\mathbf{s}_A|$ counts the bits in the binary random vector $\mathbf{s}_A$. Moreover, we define a secure deterministic key rate $R_{sd}$ as the key rate $R_d$ with the following security constaint\begin{equation}\label{secure_det}
\tfrac{1}{n} I(\mathbf{y}_E^n;\mathbf{s}_A)=0.
\end{equation}
Note that we put a stricter notion on the difference between both keys. Moreover, we do not need a public communication channel. At last, Eq.~\eqref{secure_det} evokes a so-called perfect secrecy condition in contrast to weaker standard notions. All of these changes can be achieved with no further struggles due to the lack of noise in the model, which explains the modified security condition.

\subsection{Key Exchange by Pilot Signalling}
The state-of-the-art method for key exchange is to send a pilot signal over the channel. With the help of the pilot signal, one can measure the channel gain between Bob and Alice to extract a common key from these measurements. We can represent such a pilot signal by sending a basis vector $e_1$ over the channel with power $N$. We therefore have that $\mathbf{y}_B= \mathbf{K}\mathbf{x}_A= \mathbf{K}e_1= T^{-1}_{lt}(\mathbf{K})$ and $\mathbf{y}_B$ therefore contains the bits of $\mathbf{K}$. Sending a pilot signal to Alice results in her observation $\mathbf{y}_A$ containing the bits of $\mathbf{K'}$ as well. Note that $\mathbf{y}_A$ and $\mathbf{y}_B$ just differ in the number of bits $N_A$ and $N_B$\footnote{Note that our model assumes perfect CSI. Therefore, Alice and Bob know which bits are the same.}. We can use the observations as a key and get a deterministic key generation rate 
\begin{equation}
R_d=\frac{1}{2}\min \{N_A,N_B\},
\end{equation}
where the factor of $\tfrac{1}{2}$ is due to time-division, where both Alice and Bob need to transmit once to achieve the key rate. 
Linking the model back to the Gaussian model, using the correspondence $N=\lceil \tfrac{1}{2}\log \text{SNR} \rceil^+$, we get that the key rate~\footnote{This result resembles the one from \cite{LaiPoorKeyGen} for the high power regime in which $P\rightarrow \infty$ results in the key rate $R_{key}\sim \tfrac{1}{2T}\log P$ for $T=2$.}  is 
\begin{equation}
R=\tfrac{1}{4}\min \{\log \text{SNR}_A, \log \text{SNR}_B\},
\end{equation}
where $\text{SNR}_A$ represents the signal-to-noise ratio from Alice to Bob, and $\text{SNR}_B$ from Bob to Alice. Note that due to the independence of the channel gain between Alice and Bob, and both gains from Alice and Bob to Eve, the key generation rate is also a secure key generation rate $R_d=R_{sd}$. We see that using the deterministic model has two advantages. The first one is that due to the deterministic nature, no public communication is needed to reconcile the keys. The second advantage is that the binary expansion introduces a natural quantization which is fine enough to combat the noise. As a result, the observations at Alice and Bob can be used as a key without further post-processing. %Moreover we see, that maximum entropy of the key \eqref{Unif} can only be achieved if $\hat{k} \sim \text{Unif}[1,2)$.
%\begin{IEEEeqnarray*}{rCl}
%%\IEEEeqnarraymulticol{3}{l}{
%
%\end{IEEEeqnarray*}

\subsection{Key Exchange by Product Signalling}

Assume that we do not send pilot signals over the channel, but generate a random number which is send over the channel. In that case both Alice and Bob receive a $\mathbf{y}$ which is the discrete convolution between the bits of a signal $\mathbf{x}$ and the bits of the channel gain $T^{-1}_{lt}(\mathbf{K})$. Generating a signal $\mathbf{x}_A$ and $\mathbf{x}_B$ and sending it over the respective channel produces two different observations $\mathbf{y}_A$ and $\mathbf{y_B}$. However, since the receivers Alice and Bob, know their own signal, they can multiply the observation from the left with 
$T_{lt}(\mathbf{x}_A)$ and  $T_{lt}(\mathbf{x}_B)$, respectively. 
This yields the following two observations
\begin{IEEEeqnarray*}{rCl}
T_{lt}(\mathbf{x}_B)\mathbf{y}_B &=& T_{lt}(\mathbf{x}_B)\mathbf{K}\mathbf{x}_A\\
T_{lt}(\mathbf{x}_A)\mathbf{y}_A &=&T_{lt}(\mathbf{x}_A) \mathbf{K'}\mathbf{x}_B.
\end{IEEEeqnarray*}

The following lemma will show, that both modified observations are the same.
\begin{lemma}
For arbitrary binary vectors $\mathbf{x}_A,\mathbf{y}_A \in \mathbb{F}_2^n$ and $\mathbf{x}_B,\mathbf{y}_B\in \mathbb{F}_2^m$, and truncated vectors $\bar{\mathbf{x}}_A,\bar{\mathbf{y}}_A,\bar{\mathbf{x}}_B,\bar{\mathbf{y}}_B\in \mathbb{F}_2^{\min\{n,m\}}$ we have that $T_{lt}(\bar{\mathbf{x}}_B)\bar{\mathbf{y}}_B=T_{lt}(\bar{\mathbf{x}}_A)\bar{\mathbf{y}}_A$.
%\begin{equation*}
%T_{lt}(\bar{\mathbf{x}}_B)\bar{\mathbf{y}}_B=T_{lt}(\bar{\mathbf{x}}_A)\bar{\mathbf{y}}_A.
%\end{equation*}
\label{commutativity}
\end{lemma}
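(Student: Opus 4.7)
The plan is to reduce the identity to the well--known fact that the set of $\ell\times\ell$ lower triangular Toeplitz matrices over $\mathbb{F}_2$ forms a commutative ring under multiplication, isomorphic to the polynomial ring $\mathbb{F}_2[x]/(x^\ell)$ via the map $T_{lt}$. Under this identification, the matrix--vector product $T_{lt}(\mathbf{a})\mathbf{b}$ is exactly the truncated convolution $\mathbf{a}\ast\mathbf{b}$, which is commutative and associative in its arguments. This single algebraic fact will drive the whole proof.

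First I would unpack the observations using the channel equations stated just before the lemma, namely $\bar{\mathbf{y}}_B=\bar{\mathbf{K}}\bar{\mathbf{x}}_A$ and $\bar{\mathbf{y}}_A=\bar{\mathbf{K}}'\bar{\mathbf{x}}_B$, where $\bar{\mathbf{K}}$ and $\bar{\mathbf{K}}'$ are the restrictions of $\mathbf{K}$ and $\mathbf{K}'$ to their top-left $\min\{N_A,N_B\}\times\min\{N_A,N_B\}$ blocks. The reciprocity assumption on the fine channel gain $\hat{k}$ guarantees that $\bar{\mathbf{K}}$ and $\bar{\mathbf{K}}'$ share the same first column $\bar{\mathbf{k}}$, so both equal $T_{lt}(\bar{\mathbf{k}})$. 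This gives $T_{lt}(\bar{\mathbf{x}}_B)\bar{\mathbf{y}}_B=T_{lt}(\bar{\mathbf{x}}_B)T_{lt}(\bar{\mathbf{k}})\bar{\mathbf{x}}_A$ and $T_{lt}(\bar{\mathbf{x}}_A)\bar{\mathbf{y}}_A=T_{lt}(\bar{\mathbf{x}}_A)T_{lt}(\bar{\mathbf{k}})\bar{\mathbf{x}}_B$.

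Next I would rewrite each side as a threefold truncated convolution by invoking the isomorphism twice. On the left, the product $T_{lt}(\bar{\mathbf{x}}_B)T_{lt}(\bar{\mathbf{k}})$ equals $T_{lt}(\bar{\mathbf{x}}_B\ast\bar{\mathbf{k}})$, so the whole expression evaluates to $(\bar{\mathbf{x}}_B\ast\bar{\mathbf{k}})\ast\bar{\mathbf{x}}_A$. On the right, the same manipulation yields $(\bar{\mathbf{x}}_A\ast\bar{\mathbf{k}})\ast\bar{\mathbf{x}}_B$. Commutativity and associativity of convolution then identify the two expressions with the common value $\bar{\mathbf{x}}_A\ast\bar{\mathbf{k}}\ast\bar{\mathbf{x}}_B$, closing the argument.

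The main obstacle I anticipate is dimensional bookkeeping, not the algebra itself. Since $\mathbf{K}$ and $\mathbf{K}'$ originally live in spaces of different sizes $N_A$ and $N_B$, I must verify that restricting every object to the common dimension $\min\{N_A,N_B\}$ commutes with Toeplitz multiplication, i.e.\ the top-left $m\times m$ block of a product of lower triangular Toeplitz matrices depends only on the top-left $m\times m$ blocks of the factors. This is precisely the statement that the natural projection $\mathbb{F}_2[x]/(x^{\max\{N_A,N_B\}})\to\mathbb{F}_2[x]/(x^{\min\{N_A,N_B\}})$ is a ring homomorphism; it is routine but is the step that makes the truncation in the lemma statement meaningful and must be spelled out carefully.
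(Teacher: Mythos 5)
Your proposal is correct and follows essentially the same route as the paper: unpack $\bar{\mathbf{y}}_B=\bar{\mathbf{K}}\bar{\mathbf{x}}_A$ and $\bar{\mathbf{y}}_A=\bar{\mathbf{K}}'\bar{\mathbf{x}}_B$, invoke reciprocity of the fine channel gain to identify the truncated channel matrices, and conclude by commutativity of the truncated convolution --- the paper phrases this via the explicit matrix identities $T_{lt}(\mathbf{x})\mathbf{y}=T_{lt}(\mathbf{y})\mathbf{x}$ and $T_{lt}(\mathbf{x})T_{lt}(\mathbf{y})=T_{lt}(\mathbf{y})T_{lt}(\mathbf{x})$, whereas you package the same algebra as the ring isomorphism with $\mathbb{F}_2[x]/(x^{\ell})$. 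Your explicit check that truncation to the common dimension is a ring homomorphism is a point the paper only gestures at, so your write-up is, if anything, slightly more careful on that step.
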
\begin{proof}
First of all we note that the product of a lower triangular Toeplitz matrix with a vector is commutative. The operation mimics the product of two polynomials, where  the result follows from the commutativity of the product of polynomials. Alternatively, one can think about this product as a discrete convolution, which is also commutative. Truncating the matrix such that it has the same dimension as the vector squared does not change this fact. This means that for arbitrary binary vectors $\mathbf{x},\mathbf{y}\in \mathbb{F}_2^q$ we have that 
\begin{equation}\label{commuteMV}
T_{lt}(\mathbf{x})\mathbf{y}=T_{lt}(\mathbf{y})\mathbf{x}.
\end{equation}
We know that $T_{lt}(\mathbf{x})\mathbf{y}=T_{lt}(\mathbf{x})T_{lt}(\mathbf{y})e_1$ and that $T_{lt}(\mathbf{y})\mathbf{x}=T_{lt}(\mathbf{y})T_{lt}(\mathbf{x})e_1$ and we therefore have that

\begin{equation}
T_{lt}(\mathbf{x})T_{lt}(\mathbf{y})e_1=T_{lt}(\mathbf{y})T_{lt}(\mathbf{x})e_1,
\end{equation}
which shows that 
\begin{equation}\label{commuteMM}
T_{lt}(\mathbf{x})T_{lt}(\mathbf{y})=T_{lt}(\mathbf{y})T_{lt}(\mathbf{x})
\end{equation} since the first column of a lower triangular Toeplitz matrix determines the whole matrix. This shows the commutativity of the squared lower triangular matrices. Now we can proceed to show the lemma. 
We have that
\begin{IEEEeqnarray*}{rCl}
T_{lt}(\bar{\mathbf{x}}_B)\bar{\mathbf{y}}_B &=& T_{lt}(\bar{\mathbf{x}}_B)\bar{\mathbf{K}}\bar{\mathbf{x}}_A\\
&\overset{(a)}{=}& T_{lt}(\bar{\mathbf{x}}_B)T_{lt}(\bar{\mathbf{x}}_A)T_{lt}^{-1}(\bar{\mathbf{K}})\\
&\overset{(b)}{=}& T_{lt}(\bar{\mathbf{x}}_A)T_{lt}(\bar{\mathbf{x}}_B)T_{lt}^{-1}(\bar{\mathbf{K}})\\
&\overset{(c)}{=}& T_{lt}(\bar{\mathbf{x}}_A)T_{lt}(\bar{\mathbf{x}}_B)T_{lt}^{-1}(\bar{\mathbf{K'}})\\
&=& T_{lt}(\bar{\mathbf{x}}_A)\bar{\mathbf{K'}}\bar{\mathbf{x}}_B\\
&=& T_{lt}(\bar{\mathbf{x}}_A)\bar{\mathbf{y}}_A,
\end{IEEEeqnarray*}
where $(a)$ is due to eq.~\eqref{commuteMV}, $(b)$ is due to eq.~$\eqref{commuteMM}$ and $(c)$ is due to reciprocity of the fine channel gain. Note that the coarse channel gains match as well, since we are looking at a truncated channel gain matrix.
\end{proof}

This shows that we get the same observation at both receivers.
We can therefore use the observation as a common key. The key rate is $R_d=\tfrac{1}{2}\min\{N_A,N_B\}$. However, the secure key rate will be lower because Eve can receive a part of the locally generated randomness through her channel observations. We will consider some special cases of static channel gain to simplify the model and gain more insights.

\subsubsection{Static Channel Gain at all Transceivers}
We assume that all parties, Alice, Bob and Eve know the channel gains. Moreover, we assume that the channel gain is given by an identity matrix with dimension $N$, where the dimension represents the coarse channel gain. This means that the locally generated bit-vectors are received with a cut-off respective to the coarse channel gain. The intended case to model is, where all channels have a random but fixed (over $n$) channel gain, which is generated at the beginning of the $n$ channel uses. Note that modelling the channel gain by the identity matrix does not change the entropy of the observations in comparison to a fixed (constant) one, since fixed lower triangular toeplitz matrices are bijective mappings. The model is the following
\begin{IEEEeqnarray*}{rCl}
\mathbf{y}_B &=& \mathbf{I}_{N_A}\mathbf{x}_A\\
\mathbf{y}_A &=& \mathbf{I}_{N_B}\mathbf{x}_B\\
\mathbf{y}_E &=& (\mathbf{I}_{N_1}\mathbf{x}'_A,\mathbf{I}_{N_2}\mathbf{x}'_B),
\end{IEEEeqnarray*}
where $\mathbf{y}_B$, $\mathbf{x}_A$ $\in \mathbb{F}_2^{N_A}$, $\mathbf{I}_{N_A} \in \mathbb{F}_2^{N_A\times N_A}$, $\mathbf{y}_A$, $\mathbf{x}_B$ $\in \mathbb{F}_2^{N_B}$, $\mathbf{I}_{N_B} \in \mathbb{F}_2^{N_B\times N_B}$ and $\mathbf{y}_E \in \mathbb{F}_2^{N_{1}}$ and $\mathbf{y}_E \in \mathbb{F}_2^{N_{2}}$ for $n$ odd and even, respectively. Moreover, $\mathbf{I}_{N_1} \in \mathbb{F}_2^{N_{1} \times N_{1}}$, $\mathbf{x}'_A \in \mathbb{F}_2^{N_{1}}$ and $\mathbf{I}_{N_2}\in \mathbb{F}_2^{N_{2} \times N_{2}}$, $\mathbf{x}'_B \in \mathbb{F}_2^{N_{2}}$. 
We use product signalling, and Alice and Bob generate the keys $T_{lt}(\mathbf{x}_B)\mathbf{y}_B$ and $T_{lt}(\mathbf{x}_A)\mathbf{y}_A$, respectively. Both keys are the same due to Lemma \eqref{commutativity}. It can be easily seen that the deterministic
key generation rate is again 
\begin{equation}
R_d=\tfrac{1}{2}\min \{N_A,N_B\}.
\end{equation}
%Note that the key rate for pilot signalling for the static channel gain would be zero, since it only utilizes the randomness of the channel gain itself.
However, Eve can also observe both signals. The secure key rate is therefore dependent on the channel gain to Eve and resembles a wiretap scenario. Both signal sources are needed to construct the key, and the difference is inherently included in the bit-levels. It is therefore easy to see that the secure key rate is
\begin{equation}
R_{sd}= \tfrac{1}{2} (\min \{N_A,N_B \}-\min\{ N_1,N_2 \}).
\end{equation}
Linking the key rate to the Gaussian model gives 
\begin{IEEEeqnarray*}{rCl}
R_s &=& \tfrac{1}{4} \min \{\log \text{SNR}_A, \log \text{SNR}_B\}\\
&&- \min \{\log \text{SNR}_{E1}, \log \text{SNR}_{E2}\},
\end{IEEEeqnarray*}
where $\text{SNR}_{E1}$ and $\text{SNR}_{E2}$ denotes the channel gain to Eve at odd and even time slots, respectively.
%\begin{figure}
%\includegraphics[scale=0.8]{LDM_Keys.pdf}
%\end{figure}

\subsubsection{Static Channel Gain at Eve}
A natural extension to the previous case is to look into a model where we have a random varying channel gain in the legitimate channel and a constant gain for Eve. This is the worst case scenario from a physical layer security perspective, since Eve can receive all communication in plain, while Alice and Bob need to handle the channel gain as well. The model is the following
\begin{IEEEeqnarray*}{rCl}
\mathbf{y}_B &=& \mathbf{K}\mathbf{x}_A\\
\mathbf{y}_A &=& \mathbf{K'}\mathbf{x}_B\\
\mathbf{y}_E &=& (\mathbf{I}_{N_1}\mathbf{x}'_A,\mathbf{I}_{N_2}\mathbf{x}'_B),
\end{IEEEeqnarray*}
where $\mathbf{y}_B$, $\mathbf{x}_A$ $\in \mathbb{F}_2^{N_A}$, $\mathbf{K} \in \mathbb{F}_2^{N_A\times N_A}$, $\mathbf{y}_A$, $\mathbf{x}_B$ $\in \mathbb{F}_2^{N_B}$, $\mathbf{K'} \in \mathbb{F}_2^{N_B\times N_B}$. Moreover, $\mathbf{I}_{N_1} \in \mathbb{F}_2^{N_{1} \times N_{1}}$, $\mathbf{x}'_A \in \mathbb{F}_2^{N_{1}}$ and $\mathbf{I}_{N_2}\in \mathbb{F}_2^{N_{2} \times N_{2}}$, $\mathbf{x}'_B \in \mathbb{F}_2^{N_{2}}$.  It is easy to see that we cannot achieve a higher rate $R_d$ than the previous cases with pilot signalling or static channel gain, since the maximum number of bits in the key vector is upper bounded by the mutual coarse channel gain $\min\{N_A,N_B\}$. However, calculating the secure key generation rate is more involved, and one needs to look into the term $I(\mathbf{y}_E^n;\mathbf{s}_A)$. We investigate the mutual information for a specific time step: $I(\mathbf{x}'_A,\mathbf{x}'_B;T_{lt}(\mathbf{x}_B)\mathbf{y}_B)$. If we assume that $N_A=N_B=N_1=N_2$, we can show the following: 
\begin{IEEEeqnarray*}{rCl}
I(\mathbf{x}'_A,\mathbf{x}'_B;\mathbf{X}_B\mathbf{y}_B)
&=&h(\mathbf{X}_B\mathbf{y}_B)-h(\mathbf{X}_B\mathbf{y}_B|\mathbf{x}'_B,\mathbf{x}'_A)\\
&\overset{(a)}{=}&h(\mathbf{X}_B\mathbf{y}_B)-h(\mathbf{X}_B\mathbf{X}_A\mathbf{k}|\mathbf{x}_B,\mathbf{x}_A)\\
&=&h(\mathbf{X}_B\mathbf{X}_A \mathbf{k})-h(\mathbf{k}),
\end{IEEEeqnarray*}
where we denote $T_{lt}(\mathbf{x}_B)=\mathbf{X}_B$, $T_{lt}(\mathbf{x}_A)=\mathbf{X}_A $ and $T_{lt}^{-1}(\mathbf{K})=\mathbf{k}$. Note that (a) is due to \eqref{commuteMV} and the assumptions on $N_1,N_2,N_A,N_B$.
The multiplication by $\mathbf{X}_B\mathbf{X}_A$ is an bijection in the case of known $\mathbf{x}_B$ and $\mathbf{x}_A$. In that case we would have that $I(\mathbf{x}'_A,\mathbf{x}'_B;T_{lt}(\mathbf{x}_B)\mathbf{y}_B)=0$, fulfilling the secrecy constraint. This suggest a secrecy protocol which only uses local bit-levels as additional source of randomness, if those bit-levels are not received at Eve. For this purpose we can divide the local randomness vectors $\mathbf{x}_A$ and $\mathbf{x}_B$ in common and private parts, where the common part can be received by the legitimate receiver, as well as by Eve. The private part on the other hand is only received by the legitimate receiver. Both signals can then be partitioned into two parts $\mathbf{x}_A=\mathbf{x}_A^p+\mathbf{x}_A^c$ and $\mathbf{x}_B=\mathbf{x}_B^p+\mathbf{x}_B^c$, where $\mathbf{x}_A^p,\mathbf{x}_B^p>0$ if and only if $N_A>N_1$ and $N_B>N_2$. We can design the send signal such that we only use the private part of the signal to send random bits, and the common part for pilot signalling, for example \vspace{-1.5em}\begin{equation*}
\mathbf{x}_A=\underbracket{100\cdots0\overbracket{b_1b_2\cdots b_{N_A-N_1}}^{\mathbf{x}_A^p}}_{N_A \text{ bits}}.
\end{equation*}
Due to the lower triangular structure of the channel gain operation, the private bits only get down-shifted in the observations. One can then split $I(\mathbf{x}'_A,\mathbf{x}'_B;\mathbf{X}_B\mathbf{y}_B)$ into a common $I(\mathbf{x}'_A,\mathbf{x}'_B;\mathbf{k}^c)$ and a private part $I(\mathbf{x}'_A,\mathbf{x}'_B;\mathbf{X}_B\mathbf{y}_B|\mathbf{k}^c)=I(\mathbf{x}'_A,\mathbf{x}'_B;(\mathbf{X}_B\mathbf{y}_B)^p)$ and show that both are zero. In this way we have exploited the structure of the deterministic model to design a scheme which uses a form of mixed signalling, where the common parts utilize pilot signalling and the private parts utilize product signalling. Due to the assumption that Eve has a static channel gain and can therefore see the local contribution in plain, we have obtained a worst-case scenario with a minimal achievable secure key rate. 
\subsection{Discussion}
We have analysed both, the state-of-the-art pilot signalling scheme and the new product signalling with a deterministic model. We have shown that the general key generation rate is the same for both schemes for a perfect channel gain behaviour, i.e. uniformly distributed with short coherence time. This is due to the fact that the overall size of the bit-vectors stays the same. Therefore, product signalling would have no advantage compared to pilot signalling. Moreover, the secure key rate for product signalling can be even worse because Eve can listen to both Alice and Bob, and therefore gets parts of the local randomness sources. This means that there is a trade-off which closely resembles that of a wiretap scenario and we have proposed a scheme to exploit the created algebraic structure. Product signalling begins to shine in cases with long coherence time. Here, one can compensate the lack of randomness in the channel gain, by feeding in the local sources. Product signalling would therefore yield a more robust key generation technique.

\section{Conclusions}
Motivated by an open problem in \cite{Wunder2016}, we have developed a deterministic model for secure key rate analysis of Gaussian models. The approximation is used to show secure key generation rate results on a product signalling scheme, developed in \cite{Wunder2016}. The proposed approximate model provides insights which were out-of-reach within the classical Gaussian model. An advantage of the new model is that, due to its deterministic nature (i.e. absence of noise), the key rate can be achieved without a public communication channel. Moreover, the model has an inherent quantization, which makes it possible to directly derive key rates from the equations. An interesting part is the additional algebraic structure. It was shown in the past, that algebraic structures can be exploited in several ways to gain unexpected results, especially for multi-user networks. Future research could therefore look into application of our model to analyse multi-user key generation scenarios. Furthermore, there is a need to investigate the exact gap between the approximate rate and the corresponding Gaussian model. We expect that this gap is within a few bits, due to similar results in several works on the linear deterministic model, e.g. \cite{Bresler2008}. Moreover, rate leakage in the noise effected part of the signal could lead to an adjustment of the secure key rate  of the corresponding Gaussian model. Nonetheless, we believe that the proposed model can unlock some previously out-of-reach results and therefore act as a powerful tool for the analysis of secure key generation problems.

\bibliographystyle{./IEEEtran}
\bibliography{./ref}

\end{document}